\newenvironment{varitemize}
{
\begin{list}{\labelitemi}
{\setlength{\itemsep}{0pt}
 \setlength{\topsep}{0pt}
 \setlength{\parsep}{0pt}
 \setlength{\partopsep}{0pt}
 \setlength{\leftmargin}{15pt}
 \setlength{\rightmargin}{0pt}
 \setlength{\itemindent}{0pt}
 \setlength{\labelsep}{5pt}
 \setlength{\labelwidth}{10pt}
}}
{
 \end{list} 
}
\newcounter{numberone}
\newenvironment{varenumerate}
{
\begin{list}{\arabic{numberone}.}
{
  \usecounter{numberone}
  \setlength{\itemsep}{0pt}
  \setlength{\topsep}{0pt}
  \setlength{\parsep}{0pt}
  \setlength{\partopsep}{0pt}
  \setlength{\leftmargin}{15pt}
  \setlength{\rightmargin}{0pt}
  \setlength{\itemindent}{0pt}
  \setlength{\labelsep}{5pt}
  \setlength{\labelwidth}{15pt}
}}
{
\end{list} 
}
\newcommand{\QMLL}{\ensuremath{\textsf{QMLL}}}
\newcommand{\MLL}{\ensuremath{\textsf{MLL}}}
\newcommand{\CC}{\mathbb{C}}
\newcommand{\unone}{U}
\newcommand{\untwo}{V}
\newcommand{\mmid}{\; \; \mbox{\Large{$\mid$}}\;\;}
\newcommand{\typeone}{A}
\newcommand{\typetwo}{B}
\newcommand{\typethree}{C}
\newcommand{\typefour}{D}
\newcommand{\qtypeone}{Q}
\newcommand{\qtypetwo}{R}
\newcommand{\mtypeone}{F}
\newcommand{\mtypetwo}{G}
\newcommand{\atomone}{\alpha}
\newcommand{\conone}{C}
\newcommand{\contwo}{D}
\newcommand{\conthree}{E}
\newcommand{\pconone}{P}
\newcommand{\nconone}{N}
\newcommand{\tens}{\otimes}
\newcommand{\ctxone}{\Gamma}
\newcommand{\ctxtwo}{\Delta}
\newcommand{\ctxthree}{\Theta}
\newcommand{\pfone}{\pi}
\newcommand{\pftwo}{\rho}
\newcommand{\pfthree}{\sigma}
\newcommand{\pffour}{\xi}
\newcommand{\pffive}{\mu}
\newcommand{\axiom}{\mathsf{A}}
\newcommand{\cut}{\mathsf{C}}
\newtheorem{theorem}{Theorem}[section]   
\newtheorem{lemma}[theorem]{Lemma}
\newtheorem{proposition}[theorem]{Proposition}
\newenvironment{proof}{\begin{trivlist}
       \item[\hskip \labelsep {\bfseries Proof.}]}{\hfill $\Box$ \end{trivlist}}
\renewcommand{\aa}{\alpha}
\renewcommand{\b}{{}^{\bot}}
\newcommand{\qq}[2]{\qcnp{#1}{#2}}
\newcommand{\qqb}[2]{\bqcnp{#1}{#2}}
\newcommand{\bt}[1]{#1^{\bot}}
\newcommand{\qcc}{\boxdot}
\newcommand{\bqcc}{\diamonddot}
\newcommand{\qcn}[1]{\boxdot #1}
\newcommand{\bqcn}[1]{\diamonddot #1}
\newcommand{\qcnp}[2]{\boxdot^{#2}#1}
\newcommand{\bqcnp}[2]{\diamonddot^{#2}#1}
\newcommand{\jd}[1]{\vdash #1}
\newcommand{\jdp}[2]{#1:\;\vdash #2}
\newcommand{\natone}{n}
\newcommand{\unop}[1]{\mathbb{U}_{#1}}
\newcommand{\idmat}[1]{I_{#1}}
\newcommand{\qrule}[1]{\mathsf{QR}_{#1}}
\newcommand{\qqrule}[1]{\mathsf{MR}_{#1}}
\newcommand{\mqrule}[1]{\mathsf{RR}_{#1}}
\newcommand{\becomes}{\Longrightarrow}
\newcommand{\qw}[1][-1]{\ar @{-} [0,#1]}
\newcommand{\qwx}[1][-1]{\ar @{-} [#1,0]}
\newcommand{\gate}[1]{*+<.6em>{#1} \POS ="i","i"+UR;"i"+UL **\dir{-};"i"+DL **\dir{-};"i"+DR **\dir{-};"i"+UR **\dir{-},"i" \qw}
\newcommand{\control}{*!<0em,.025em>-=-<.2em>{\bullet}}
\newcommand{\ctrl}[1]{\control \qwx[#1] \qw}
\newcommand{\targ}{*+<.02em,.02em>{\xy ="i","i"-<.39em,0em>;"i"+<.39em,0em> **\dir{-}, "i"-<0em,.39em>;"i"+<0em,.39em> **\dir{-},"i"*\xycircle<.4em>{} \endxy} \qw}
\newcommand{\Qcircuit}{\xymatrix @*=<0em>}
\newcommand{\qcone}{\mathbf{Q}}
\newcommand{\ec}{[\cdot]}
\newcommand{\stckone}{s}
\newcommand{\stcktwo}{r}
\newcommand{\stckthree}{q}
\newcommand{\regone}{Q}
\newcommand{\regtwo}{R}
\newcommand{\csb}[2]{#1[#2]}
\newcommand{\length}[1]{|#1|}
\newcommand{\depth}[1]{\partial(#1)}
\newcommand{\qiam}[1]{{M}_{#1}}
\newcommand{\sts}[1]{{Q}_{#1}}
\newcommand{\ists}[1]{{IQ}_{#1}}
\newcommand{\fsts}[1]{{FQ}_{#1}}
\newcommand{\trs}[1]{\longrightarrow_{#1}}
\newcommand{\stone}{S}
\newcommand{\sttwo}{T}
\newcommand{\stthree}{R}
\newcommand{\emstck}{\varepsilon}
\newcommand{\sem}[1]{\llbracket #1 \rrbracket}
\newcommand{\semp}[2]{\llbracket #1 \rrbracket_{#2}}
\newcommand{\IAM}{\textsf{IAM}}
\newcommand{\QIAM}{\textsf{QIAM}}
\title{On Multiplicative Linear Logic,\\ Modality and Quantum Circuits}
\author{
Ugo Dal Lago
\institute{Universit\`a di Bologna \& INRIA }
\email{dallago@cs.unibo.it}
\and
Claudia Faggian
  \institute{CNRS \& Universit\'e Denis-Diderot Paris 7}
\email{faggian@pps.jussieu.fr}
}
\begin{document}
\maketitle
\begin{abstract}
A logical system derived from linear logic and called \QMLL\ is 
introduced and shown able to capture all unitary quantum circuits. 
Conversely, any proof is shown to compute, through a concrete 
GoI interpretation, some quantum circuits. The system \QMLL, 
which enjoys cut-elimination, is obtained by endowing multiplicative linear 
logic with a quantum modality.
\end{abstract}

\section{Introduction}
It's more and more clear that strong relationships
exist between linear logic \cite{Girard:LL}  and quantum computation. This
seems to go well beyond the easy observation that the
intrinsic resource-consciousness of linear logic copes well with
the impossibility of cloning and erasing qubits.
There are several different research directions in which this 
interaction has recently started to manifest itself. We like to mention 
the following:
\begin{varitemize}
\item
  First of all, various lambda calculi  for quantum computation
  have been introduced in the last ten years~\cite{VanTonder, SelingerV06, SelingerV08, DLMZ}.
  The common denominator between the different proposals is precisely the
  use of linearity to control duplication and erasing: it is enforced either
  by typing or by structural constraints on the shape of lambda terms.
\item Coherence spaces (the semantics from which linear logic originated) have been revisited by
  Girard~\cite{Girard:QCS}, in an attempt to improve the understanding of 
  the relations between quantum and logic, and to relate coherence spaces 
  and quantum actions.  
\item
  Blute and Panangaden have recently shown how a simple calculus of Feynman's diagrams
  can give semantics to linear logic proof-nets~\cite{Blute_proofnets}.
\end{varitemize}
Even with all these recent advances, we still lack a  truly convincing correspondence between linear
logic as a proof theory and quantum computation as a computational model. In particular, 
a quantum analogue of the Curry-Howard correspondence has not been defined yet, and the 
known attempts (e.g.~\cite{Duncan}) have not any direct relationship with linearity in the sense of linear logic.

At a deeper level, a fundamental aspect of linear logic, which has not been exploited in a quantum setting
yet, is its rooting into a mathematical model based on operator algebras, through the so-called geometry of 
interaction~\cite{Girard87a,Girard88,DanosRegnier:PNHil} (GoI in the following).  
The GoI program is that of a dynamic interpretation of computation as a 
flow of information circulating around a net; this is at the heart of linear logic from its beginnings.
This flow of information can be formulated  both as a classical, token-based interactive machine~\cite{GonthierAbadiLevy} or 
as an algebra of bounded operators on the infinite dimension Hilbert space~\cite{Girard88}, which is 
the canonical state space for quantum computation models like quantum
Turing machines. We believe that this aspect is highly relevant to the logical approach to quantum computing, 
and has the potential for turning it into a powerful tool. Recently, GoI has been shown to be
able to give semantics to Selinger and Valiron's quantum lambda calculus~\cite{Hasuo11}.

In this paper, we describe  ongoing work about the relationships between quantum computation and 
linear logic and a first investigation on the underlying GoI. 
Some motivations and goals underlie and guide our investigation:
\begin{varitemize}
\item 
  we would like to get a model which is concrete, together with an efficient encoding; in our view, 
  GoI should be able to give a concrete syntax (as is the case of linear logic), 
  and should eventually be able to talk about the computational 
  complexity of the calculus. 
\item 
  we aim to have a proof-theoretical account of both quantum information and 
  computation (i.e. quantum data and algorithms).
\end{varitemize}
Specifically, in this paper we introduce a logical system, called \QMLL, which is 
obtained by endowing multiplicative linear logic with quantum modalities; we then investigate in detail 
the relations between \QMLL\ and quantum circuits. A key ingredient in the proof
of this correspondence is an interactive abstract machine (in the sense of~\cite{DanosRegnier:PNHil}) 
for the system, which is proved both to be a model of 
\QMLL\ cut-elimination and to give a computational meaning to proofs. This concrete approach, and the  close connection between 
circuits, logic and semantic models are an important difference with the current efforts in this direction.
The results we have obtained are encouraging.
\section{Proof-Nets, GoI, and Superposition}
The geometry of interaction of a proof, as initially conceived by Girard~\cite{Girard87a,Girard88}, is an operator on a 
Hilbert space $\ell^2$. As a matter of fact, however, the interpretation of linear logic proofs only makes use of a small 
fragment of the setting laid out in \cite{Girard88}. Fundamentally, the interpretation of a
proof (at least in the multiplicative fragment) is just a permutation on 
a finite set. Our aim here is to enrich linear logic in such a way that a larger portion 
of the GoI semantic universe is actually exploited, this way going towards 
a calculus with quantum features. The design of \QMLL\ has been guided by intuitions 
on what an hypothetical ``quantum GoI'' interpretation would look like.

In this section, we discuss some of the ideas which underlie GoI. The presentation is going to be very simplified; our purpose 
is to give an intuition rather than the formal details. For a more thorough presentation of GoI, 
we refer to \cite{Girard87a}, or to Girard's introductory notes \cite{GirardPA}.
We here focus our attention on multiplicative linear logic (\MLL\ in the following), and moreover only to 
\emph{cut-free} proofs. 

\newcommand{\tree}[1]{\mathsf{T}(#1)}
\paragraph{What is  a \MLL\ proof of  $\typeone$?}
Let us consider the (cut-free) proof-nets of \MLL, and make the assumption 
that all axioms are atomic. A cut-free proof of a formula 
$\typeone$ with $\natone$ occurrences of atoms 
will be interpreted as an $n\times n$ matrix (since the dimension is finite, 
we identify operators and matrices).
But how does a cut-free proof of a formula $\typeone$ look like?
Until we reach the axiom links, we have no freedom: in a proof-net, each formula 
is conclusion of a well-defined link, corresponding to the principal connective 
of the formula. In other words, a proof-net with conclusion $\typeone$ is necessarily the
disjoint union of two graphs:
\begin{varitemize}
\item  
  the formula tree $\tree{\typeone}$ of $\typeone$ (whose leaves are the occurrences of atoms);
\item 
  axiom links connecting pairs of dual atoms.
\end{varitemize}
All cut-free proofs of the same formula $\typeone$ have the same formula tree. What characterizes 
each of them is the linking among the occurrences of atoms. \emph{The atom links are hence enough to fully 
describe a cut-free \MLL\ proof of $\typeone$.}

As an example, let us consider the two cut-free proofs of the formula
$\typetwo=(\aa\b \parr \aa\b) \parr (\aa\otimes \aa)$. By 
indexing different occurrences of the same atom or co-atom, we
obtain $\typetwo=(\aa\b_1 \parr \aa\b_2) \parr (\aa_1\otimes \aa_2)$.
The two proofs (let us call them $\pfone$ and $\pftwo$)  
are in Figure \ref{fig:proofsB} below.
\begin{figure}[htbp]
\centering
\includegraphics{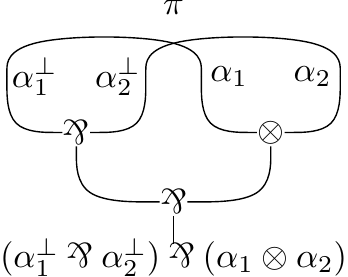}
\hspace{1cm}
\includegraphics{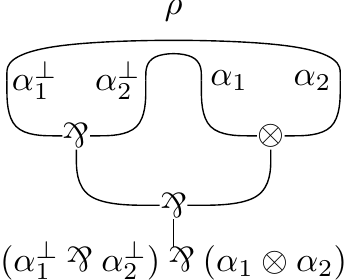}
\hspace{1cm}
\includegraphics{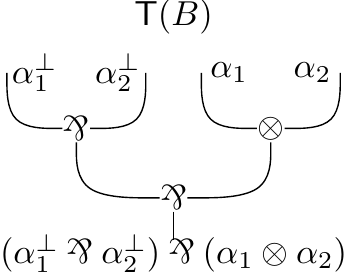}
\caption{Two proofs of $\typetwo$ and its syntax three $\tree{\typetwo}$}\label{fig:proofsB}
\end{figure}
Both  $\pfone$ and $\pftwo$ have the same formula tree $\tree{\typetwo}$.
In the case of $\pfone$, the axiom links are 
$\{\aa\b_1,\aa_1\}$ and $\{\aa\b_2,\aa_2\}$.
In the case of $\pftwo$, the axioms links are  
$\{\aa\b_1,\aa_2\}$ and $\{\aa\b_2,\aa_1\}$.

A convenient way to describe the links among $\natone$ (occurrences of) atoms is by means of a $n\times n$ matrix, which can be seen
as the adjacency matrix of the graph describing the axiom links, as the two graphs describing $\pfone$ and
$\pftwo$ in Figure~\ref{fig:graphs}. Since the axiom links describe the proof, this matrix itself is a faithful representation of the proof.
\begin{figure}[htbp]
\centering
\includegraphics{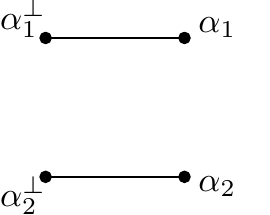}
\hspace{1cm}
\includegraphics{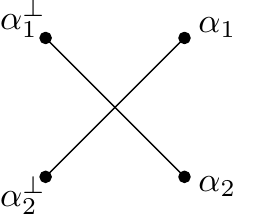}
\caption{Axiom graphs for $\pfone$ and $\pftwo$.}\label{fig:graphs}
\end{figure}     
Going back to our example, $\pfone$ and $\pftwo$ can be easily seen to be described by the following two matrices (once a 
suitable total order on atom and co-atom occurrences has been fixed):
$$
M=
\begin{pmatrix}
0 & 0 & 1 & 0 \\
0 & 0 & 0 & 1\\
1 & 0  & 0 & 0\\
0 & 1 & 0 &0
\end{pmatrix}
\qquad
N=
\begin{pmatrix}
0  & 0  & 0  & 1\\
0  & 0  & 1  & 0\\
0 &1  & 0  & 0\\
1  & 0 & 0 & 0
\end{pmatrix} 
$$
As a matter of fact, in this (simple, because the proofs are cut-free) case,
$M$ is actually the GoI interpretation of $\pfone$, while $N$ is 
the interpretation of $\pftwo$.

In general, the interpretation of a proof is
defined by induction, starting from the interpretation of the axioms.
For example, if $A$ is a formula with $n$ atoms, then
the axiom $\vdash A, A\b$ is associated to the  $2n \times 2n$  matrix 
$$
\begin{pmatrix}
 0 & I_n  \\
 I_n & 0
\end{pmatrix}
$$
where $I_n$ is the identity $n\times n$ matrix. We do not want to give more details here: the reader can find them 
clearly explained in~\cite{GirardPA}, Section 19.3.

\paragraph{Towards a quantum calculus.}
A matrix which interprets a \MLL\ proof is hermitian, because the links are undirected. It also has the 
following features: on the one hand it
has exactly one non-null element per row and column, and on the other hand
such an element is $1$. In other words, the matrix is a permutation matrix.
What if we relax these constraints?
Let us work again with an example. Let us consider the following $4\times 4$ hermitian complex matrix $M$:
$$
\begin{pmatrix}
0  & U \\
U^* &  0
\end{pmatrix}
$$
where U is the following $2\times2$ {\em unitary} matrix.
$$
\begin{pmatrix}
a_{11} & a_{12}  \\
a_{21} & a_{22}
\end{pmatrix}
$$
If we have two copies $\alpha\b_1,\alpha\b_2$ of $\alpha\b$ and two copies $\alpha_1,\alpha_2$ of $\alpha$, we can
see each non-zero coefficient $a_{ij}$ of $U$ as describing the existence of a link between the co-atom occurrence $\alpha\b_i$ 
and the atom occurrence $\alpha_j$;
the link is weighted by the coefficient $a_{ij}$.
Hence we read the matrix $M$ above as the interpretation of the following ``weighted'' set
of axiom links:
\begin{center}
\includegraphics{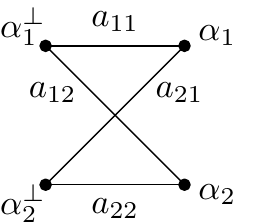}
\end{center}

We think of this set of  links as being in  ``quantum superposition'': we have the link 
$\{\aa\b_1,\aa_1\}$ with amplitude $a_{11}$, and the link $\{\aa\b_1, \aa_2 \}$ with amplitude $a_{12}$.

Such a ``generalized axiom'' can be  described in a compact way by providing a pair:
an atomic axiom link 
and the unitary matrix $U$. The graph above, in other words, becomes the following proof
\begin{center}
\includegraphics{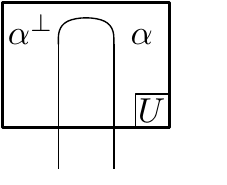}
\end{center}
namely something like a ``box'', labeled with the unitary matrix $U$ and containing an axiom link.
This is actually the idea beyond the $\qrule{n}$-rule, which characterizes our calculus $\QMLL$ with respect to
ordinary, classical, $\MLL$ (see next section). More generally, an atomic axiom link and a unitary matrix 
$V=(a_{ij})$ on  $\CC^{2^n}$, describe a generalized axiom link, which consists in:
\begin{varitemize}
\item 
  $2^n$ occurrences of $\bt{\alpha}$ and $2^n$ occurrences of $\alpha$;
\item
  $2^{2n}$ links; the complex number $a_{ij}$ from $V$ describes the presence of a link 
  with amplitude $a_{ij}$ from the occurrence $\bt{\alpha}_i$ to the occurrence
  $\alpha_j$.
\end{varitemize}
Logically, the intuition is that the rule $\qrule{n}$ produces $2^{2n}$ copies of the axiom link, which are in  
``quantum superposition''. Actually, rule $\qrule{n}$ acts not only on axiom links, but 
on arbitrary proofs.

The operator-theoretic GoI interpretation of \QMLL, as well as a formal development of a system of proof-nets for it, is 
not the object of this preliminary report; we postpone them to a follow-up paper. We briefly explained it here
because these are the main inspiring ideas behind \QMLL.
\section{The Syntax of \QMLL}
Formulas of \QMLL\ are generated by the following grammar:
$$
\typeone ::=\atomone\mmid\bt{\atomone}\mmid
\typeone\parr\typeone \mmid\typeone\tens\typeone\mmid \qcn{A}\mmid\bqcn{A}
$$
In other words, \QMLL's formulas are obtained by enriching the language
of \MLL\ with two unary modal connectives, namely $\qcc$ and $\bqcc$, which are
dual of each other. Linear negation can  then be defined
 in the usual way, by setting $\bt{(\qcn{A})}=\bqcn{\bt A}$ and $\bt{(\bqcn{A})}=\qcn{\bt A}$.
 As an example, $\bt{(\qcn{(\typeone\parr\typetwo)})}=
\bqcn{(\bt\typeone\otimes\bt\typetwo)}$.
$\qcnp{\typeone}{\natone}$ is syntactic sugar for
$$\underbrace{\boxdot(\boxdot(\ldots\boxdot(}_{\mbox{$\natone$ times}}\typeone)\ldots)).$$
Similarly for $\bqcnp{\typeone}{\natone}$.

A \emph{modal formula} is a formula in the form $\qcn{\typeone}$ or $\bqcn{\typeone}$.
We reserve the metavariables $\qtypeone,\qtypetwo$ to indicate modal formulas, and 
the metavariables $\mtypeone,\mtypetwo$ to indicate formulas whose most-external connective is not a modality ($\qcc$ or $\bqcc$).

\QMLL\ will be given as a sequent calculus.
\emph{Sequents} have the form $\jd{\ctxone}$, where 
 $\ctxone,\ctxtwo$ are finite multisets of formulas. 
$\unop{\natone}$ is the class of unitary operators from
 $\CC^{2^\natone}$ to itself. We denote by $\idmat{\natone}$  the identity operator
on $\unop{\natone}$. The following are the rules of \QMLL.
$$
\begin{array}{ccccc}
  \infer[\axiom]
  {\jd{\bt{\typeone}, \typeone }}
  {}
  &
  &
  \infer[\cut]
  {\jd{\ctxone,\ctxtwo}}
  {\jd{\ctxone,\typeone} & \jd{\Delta,\bt{\typeone}}}
  &
  &
  \infer[\parr]
  {\jd{\ctxone,\typeone\parr \typetwo}}
  {\jd{\ctxone,\typeone,\typetwo }}
\end{array}
$$
\vspace{1mm}
$$
\begin{array}{ccccc}
  \infer[\tens]
  {\jd{\ctxone,\ctxtwo,\typeone\tens\typetwo}}
  {\jd{\ctxone,\typeone} & \jd{\ctxtwo,\typetwo}}
  &
  &
  \infer[\qqrule{n}]
  {\jd{\qqb{\qtypeone}{n},\qq{\qtypetwo}{n}}}
  {\jd{\qtypeone,\qtypetwo} & \unone\in\unop{n}}
  &
  &
  \infer[\mqrule{n}]
  {\jd{\qqb{\mtypeone}{n},\qq{\mtypetwo}{n}}}
  {\jd{\mtypeone,\mtypetwo} & \unone\in\unop{n}}
\end{array}
$$
The rules $\qqrule{n}$ and $\mqrule{n}$
are said to be \emph{quantum rules}.
Observe how the two quantum  rules act exactly the same way on
their premise, adding $n$ instances of the modalities $\qcc$ and $\bqcc$ to each of the two
formulas in it. The only difference is in the nature of those formulas, which are
required to be modal  formulas in $\qqrule{n}$ and not modal formulas in $\mqrule{n}$. In the following, 
$\qrule{n}$ stands for either $\qqrule{n}$ or $\mqrule{n}$. 

\newcommand{\MELL}{\textsf{MELL}}
A \emph{proof} $\pfone$ is, as usual, a tree built according to the rules above. Occurrences
of quantum rules can be seen as \emph{boxes}, similarly to what happens with exponential
modalities, e.g., in \MELL.

The \emph{principal formulas} of an instance of a rule are the occurrences of formulas 
which are introduced (or cut) by the rule. Observe how any instance of $\qrule{n}$ has two principal formulas 
(and no other formula).
\section{Cut Elimination}\label{sect:cutelim}
In this section, \QMLL\ will be proved to enjoy cut-elimination. This will be carried out by giving 
an effective binary relation on proofs which allows to remove all instances of rule $\cut$
from proofs.

Formally, the relation $\becomes$ on the space of \QMLL\ proofs is defined by some \emph{reduction rules}, which
can be applied in any context:
\begin{varitemize}
\item
  \textbf{Axiom Reduction.}  
  Every time one of the premises of a cut-rule is an axiom, the cut is eliminated in the usual way,
  by means of the following reduction:
  $$ 
  \infer[\cut]{\vdash \Gamma, A}{{\Gamma, A }  &  \infer[\axiom]{\vdash A\b,A}{}} 
  \becomes
  {\vdash \Gamma, A }{}  
  $$
\item
  \textbf{Multiplicative Principal Reduction.}
  The dual multiplicative connectives $\tens$ and $\parr$ annihilate each other, as usual:
  $$
  \infer[\cut]
  {\jd{\ctxone,\ctxtwo,\ctxthree}}
  {\infer[\tens]
    {\jd{\ctxone,\ctxtwo,\typeone\tens\typetwo}}
    {\jd{\ctxone,\typeone}  & \jd{\ctxtwo,\typetwo}}
   & 
   \infer[\parr]
    {\jd{\ctxthree,\bt{\typeone}\parr\bt{\typetwo}}}
    {\jd{\ctxthree,\bt{\typeone},\bt{\typetwo}}}}
  \becomes
  \infer[\cut]
  {\jd{\ctxone,\ctxtwo,\ctxthree}}
  {\jd{\ctxone,\typeone}
   &
   \infer[\cut]
   {\jd{\ctxtwo,\ctxthree,\bt{\typeone}}}
   {\jd{\ctxtwo,\typetwo} & \jd{\ctxthree,\bt{\typeone},\bt{\typetwo}}}
  }
  $$
\item
  \textbf{Quantum Principal Reduction.}
  This reduction can be performed  when
  both cut-formulas are introduced by the rule
  $\qrule{m}$  (the arity $m$ being the same in both sides):
  $$
  \infer[\cut]
  {\jd{ \qqb{\typeone}{m},  \qq{\typethree}{m}   }}
  {
    \infer[\qrule{m}]
    { \jd{ \qqb{\typeone}{m}, \qq{\typetwo}{m}  }}
    {\jd{\typeone,\typetwo} & \unone \in\unop{m}}
    &
    \infer[\qrule{m}]
    {  \jd{ \qqb{\bt{\typetwo}}{m},\qq{\typethree}{m}}  }
    {\jd{\bt{\typetwo},\typethree} & \untwo\in\unop{m}      }
  }
  \becomes
  \infer[\qrule{m}]
  {   \jd{\qqb{\typeone}{m},\qq{\typethree}{m}}}
  {
    \infer[\cut]
    {\jd{\typeone,\typethree}}
    {
      \jd{\typeone,\typetwo}
      &
      \jd{\bt{\typetwo},\typethree}
    }
    &
    \unone\cdot\untwo\in\unop{m}}
  $$
\item
\textbf{Quantum $\eta$-Expansion.} Axioms introducing modal formulas
can be $\eta$-expanded as follows:
  $$
  \infer[\axiom]
  {\qqb{\bt{\typeone}}{n},\qq{\typeone}{n}}
  {}
  \becomes
  \infer[\qrule{n}]
  {\jd{\qqb{\bt{\typeone}}{n},\qq{\typeone}{n}}}
  {\infer[\axiom]{\jd{\bt{\typeone},\typeone}}{} & I_n\in\unop{n}}
  $$
\item
\textbf{Quantum Contraction.} Two successive applications of a quantum rule can be contracted:
$$
\infer[\qrule{n}]
{ \jd{\qqb{A}{k+n},\qq{B}{k+n}}}{\infer[\qrule{k}]{\qqb{A}{k},\qq{B}{k}}
{\jd{A,B}  & \unone\in \unop{k}} &  \untwo\in \unop{n}} 
\becomes
\infer[\qrule{k+n}]
{\jd{\qqb{A}{k+n},\qq{B}{k+n}}}
{\jd{A,B} & \unone\tens\untwo\in\unop{k+n}}
$$
\item
\textbf{Commuting Reduction.}
  These three reduction rules allow us to lift up a cut whose principal formula is not introduced immediately over it:
  $$
  \infer[\cut]
  {\jd{\ctxone,\ctxtwo,\typetwo\parr\typethree}}
  {\jd{\ctxone,\typeone}
   & 
   \infer[\parr]
    {\jd{\ctxtwo,\bt{\typeone},\typetwo\parr\typethree}}
    {\jd{\ctxtwo,\bt{\typeone},\typetwo,\typethree}}}
  \becomes
  \infer[\parr]
    {\jd{\ctxone,\ctxtwo,\typetwo\parr\typethree}}
    {\infer[\cut]
      {\jd{\ctxone,\ctxtwo,\typetwo,\typethree}}
      {\jd{\ctxone,\typeone} &
       \jd{\ctxtwo,\bt{\typeone},\typetwo,\typethree}}}
  $$
  $$
  \infer[\cut]
  {\jd{\ctxone,\ctxtwo,\ctxthree,\typetwo\tens\typethree}}
  {\jd{\ctxone,\typeone}
   & 
   \infer[\tens]
    {\jd{\ctxtwo,\ctxthree,\bt{\typeone},\typetwo\tens\typethree}}
    {\jd{\ctxtwo,\bt{\typeone},\typetwo} & \jd{\ctxthree,\typethree}}}
  \becomes
  \infer[\tens]
    {\jd{\ctxone,\ctxtwo,\ctxthree,\typetwo\tens\typethree}}
    {\infer[\cut]
      {\jd{\ctxone,\ctxtwo,\typetwo}}
      {\jd{\ctxone,\typeone} &
       \jd{\ctxtwo,\bt{\typeone},\typetwo}}
     &
     \jd{\ctxthree,\typethree}}
  $$
 $$
  \infer[\cut]
  {\jd{\ctxone,\ctxtwo,\ctxthree,\typetwo\tens\typethree}}
  {\jd{\ctxone,\typeone}
   & 
   \infer[\tens]
    {\jd{\ctxtwo,\ctxthree,\bt{\typeone},\typetwo\tens\typethree}}
    {\jd{\ctxtwo,\typetwo} & \jd{\ctxthree,\bt{\typeone},\typethree}}}
  \becomes
  \infer[\tens]
    {\jd{\ctxone,\ctxtwo,\ctxthree,\typetwo\tens\typethree}}
    {\infer[\cut]
      {\jd{\ctxone,\ctxthree,\typethree}}
      {\jd{\ctxone,\typeone} &
       \jd{\ctxthree,\bt{\typeone},\typethree}}
     &
     \jd{\ctxtwo,\typetwo}}
  $$
\end{varitemize}
\newcommand{\becomess}{\becomes^*}
Let $\becomess$ be the reflexive and transitive closure of $\becomes$. A proof $\pfone$ is \emph{normal}
if there is not any $\pftwo$ such that $\pfone\becomes\pftwo$.
\begin{proposition}
Every normal proof is cut-free.
\end{proposition}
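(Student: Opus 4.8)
The plan is to prove the contrapositive: any proof $\pfone$ containing at least one instance of $\cut$ admits a reduction, hence is not normal. First I would fix an \emph{uppermost} cut in $\pfone$, that is, one whose two immediate subproofs $\pfone_1$ and $\pfone_2$ are already cut-free; such a cut exists as soon as any cut occurs. Write the cut-formula as $\typeone$ on the $\pfone_1$ side and $\bt{\typeone}$ on the $\pfone_2$ side. The whole argument is a case analysis on the last rules of $\pfone_1$ and $\pfone_2$ and on whether the cut-formula is \emph{principal} in them.

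If either subproof is an axiom, \textbf{Axiom Reduction} fires. Otherwise, each of $\pfone_1,\pfone_2$ ends with a $\parr$, a $\tens$, or a quantum rule. Suppose the cut-formula is not principal in one of the premises. Since a quantum rule has exactly two principal formulas and no other formula, and axioms have been excluded, that premise must end with a $\parr$ or a $\tens$ whose principal formula is distinct from the cut-formula; then one of the three \textbf{Commuting Reductions} applies, pushing the cut above that rule. Hence we may assume the cut-formula is principal on both sides. If $\typeone$ is multiplicative it is a $\tens$-formula on one side and the dual $\parr$-formula on the other, so \textbf{Multiplicative Principal Reduction} applies.

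The remaining case, with $\typeone$ modal, is where I expect the real work to lie, and ruling out an arity mismatch is the main obstacle. Here both premises end with a quantum rule: up to swapping the two modalities we may take $\typeone=\qq{\typetwo}{m_1}$ introduced by $\qrule{m_1}$ in $\pfone_1$, and $\bt{\typeone}=\qqb{\bt{\typetwo}}{m_1}$ introduced by $\qrule{m_2}$ in $\pfone_2$. If $m_1=m_2$ the two boxes match and \textbf{Quantum Principal Reduction} fires. If instead, say, $m_1>m_2$, then writing the $\pfone_2$-principal formula as $\qqb{\qtypeone}{m_2}$ forces $\qtypeone=\qqb{\bt{\typetwo}}{m_1-m_2}$, which begins with a modality; thus $\qtypeone$ is modal, the rule in $\pfone_2$ is $\qqrule{m_2}$, and its premise $\jd{\qtypeone,\qtypetwo}$ is a two-formula sequent with \emph{both} formulas modal (the side condition of $\qqrule{m_2}$), proved by a cut-free subproof. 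The last rule of that subproof cannot be $\parr$ or $\tens$ (its principal formula would have to be one of $\qtypeone,\qtypetwo$, which are modal, not multiplicative), cannot be a quantum rule (that would be a quantum rule applied immediately above $\qqrule{m_2}$, i.e.\ a \textbf{Quantum Contraction} redex), and cannot be a $\cut$ (the chosen cut was uppermost); so it must be an axiom $\jd{\qtypeone,\bt{\qtypeone}}$ on modal formulas, which is a \textbf{Quantum $\eta$-Expansion} redex. In every case some reduction applies, so $\pfone$ is not normal; equivalently, every normal proof is cut-free. The delicate point, as anticipated, is the modal principal case: it is exactly the context-freeness of the quantum rules (two principal formulas, no side formulas, and no weakening) together with normality with respect to $\eta$-expansion and quantum contraction that collapses a potential arity mismatch into an exposed $\eta$- or contraction-redex.
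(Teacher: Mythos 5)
Your proof is correct and takes essentially the same route as the paper's: both arguments reduce to a cut whose two immediate subproofs are cut-free (your choice of an uppermost cut plays exactly the role of the paper's structural induction), dispatch axioms, non-principal cut formulas (commuting reductions), and principal multiplicative or arity-matched quantum cuts in the same way, and resolve the arity-mismatch case identically—by noting that the inner formulas of the smaller box must be modal, so the cut-free premise above it must end in a quantum rule or an axiom, exposing a Quantum Contraction or Quantum $\eta$-Expansion redex. The differences (organizing cases by principality of the cut formula, and spelling out the excluded cases explicitly) are purely presentational.
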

\begin{proof}
We can prove that any proof $\pfone$ containing a cut is not normal by induction on the structure
of $\pfone$. The only interesting case is the one when the last rule of $\pfone$ is a cut and
the two immediate sub-proofs are cut-free. All the other cases can be easily handled by way of the
induction hypothesis. In other words, $\pfone$ is
$$
\infer[\cut]
  {\jd{\ctxone,\ctxtwo}}
  {\jdp{\pftwo}{\ctxone,\typeone} & \jdp{\pfthree}{\ctxtwo,\bt{\typeone}}}
$$
where $\pftwo$ an $\pfthree$ are cut-free. Now:
\begin{varitemize}
\item
  If either $\pftwo$ or $\pfthree$ is introduced by an axiom, $\pfone$ is not normal, since an
  axiom reduction can be applied to it;
\item
  If the last rule in $\pftwo$ is multiplicative, then:
  \begin{varitemize}
  \item
    If the principal formula of that rule is not $\typeone$, then a commuting
    reduction can be applied to $\pfone$.
  \item
    If the principal formula of that rule is precisely $\typeone$, then consider the last rule of
    $\pfthree$. If it is multiplicative and has principal formula $\bt{\typeone}$, then $\pfone$
    is not normal, because a multiplicative principal reduction can be applied to it. If it 
    is multiplicative and has a principal formula in $\ctxtwo$, then
    again $\pfone$ is not normal, because a commuting reduction can be applied to it.
  \end{varitemize}
\item
  If the last rule in $\pfthree$ is multiplicative, then we can proceed as in the previous case;
\item
  We can then assume that $\pfone$ has the following form:
  $$
  \infer[\cut]
  {\jd{\qqb{\typeone}{m},\qq{\typefour}{n}}}
  {\infer[\qrule{m}]
    {\jd{\qqb{\typeone}{m},\qq{\typetwo}{m}}}
    {\jdp{\pffour}{\typeone,\typetwo}}
    & 
   \infer[\qrule{n}]
    {\jd{\qqb{\typethree}{n},\qq{\typefour}{n}}}
    {\jdp{\pffive}{\typethree,\typefour}}}
  $$
  If $n=m$, then we perform a quantum principal reduction. Otherwise, let assume $m>n$. In this case, 
  both $\typethree$ and $\typefour$ must be modal formulas, because $\qqb{\typethree}{n}=\bt{(\qq{\typetwo}{m})}$ 
  and, as a consequence, the last rule in $\pffive$ must
  be itself a quantum rule, or an axiom. Hence we can perform either a quantum contraction, or (in case of axiom) a quantum expansion.
\end{varitemize}
This concludes the proof.
\end{proof}

\begin{proposition}\label{CSN}
The binary relation $\becomes$ is confluent and strongly normalizing.
\end{proposition}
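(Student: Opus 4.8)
The plan is to prove strong normalization first, by exhibiting a measure on proofs that strictly decreases along every $\becomes$-step and takes values in a well-founded order; confluence then follows from strong normalization together with local confluence, via Newman's Lemma. So it suffices to (i) produce the terminating measure and (ii) check that $\becomes$ is locally confluent, i.e.\ that any two one-step reducts of a proof have a common $\becomess$-reduct.

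For strong normalization I would attach to a proof $\pfone$ the tuple $\mu(\pfone)=(D(\pfone),E(\pfone),T(\pfone),S(\pfone))$, ordered lexicographically, where $D(\pfone)$ is the finite multiset of the degrees $\depth{\typeone}$ of all cut formulas $\typeone$ occurring in $\pfone$ (with $\depth{\typeone}$ the number of connective and modality occurrences in $\typeone$), ordered by the Dershowitz--Manna multiset order; $E(\pfone)$ is the total number of modalities $\qcc,\bqcc$ in the conclusions of the $\axiom$-leaves of $\pfone$; $T(\pfone)$ is the number of rule occurrences in $\pfone$; and $S(\pfone)$ is the sum, over all cuts $c$ of $\pfone$, of the size of the subproof ending in $c$. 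The design is dictated by how the reductions behave. The axiom, multiplicative-principal and quantum-principal reductions each delete a cut or replace a cut on $\typeone\tens\typetwo$ (resp.\ on $\qq{\typetwo}{m}$) by cut(s) on the strictly smaller $\typeone,\typetwo$ (resp.\ $\typetwo$), hence strictly decrease $D$. The quantum $\eta$-expansion touches no cut, so fixes $D$, while stripping at least one modality off an axiom conclusion, so strictly decreases $E$ (even though it raises $T$). The quantum contraction alters neither a cut nor an axiom, fixing $D$ and $E$, and merges two quantum rules into one, strictly decreasing $T$. Finally each commuting reduction preserves the unique cut formula and the total proof size, fixing $D,E,T$, while pushing a cut above a logical rule, which strictly decreases $S$. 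Thus for each reduction there is a leftmost coordinate that strictly drops while all coordinates to its left are unchanged, which is exactly a lexicographic decrease; since the lexicographic product of the multiset order over $\NN$ with three copies of $(\NN,<)$ is well-founded, $\becomes$ is strongly normalizing.

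For local confluence I would classify the overlaps of two redexes in a proof $\pfone$. Because none of the reductions ever duplicates a subproof, two redexes at disjoint positions, or one lying inside a subproof that the other merely relocates, commute outright: reducing either first leaves the other (or its unique residual) available, and the two orders meet. The genuine critical pairs are the few situations in which the redexes share a rule occurrence: (a) a $\cut$ both of whose premises end in a non-principal logical rule, where a commuting reduction may fire on either side; (b) a $\cut$ one of whose premises is a modal $\axiom$, where the axiom reduction competes with the quantum $\eta$-expansion of that axiom; (c) three quantum rules stacked in a row, contractible top-down or bottom-up; and (d) a quantum-principal $\cut$ one of whose premises is itself a stack of quantum rules, where the principal reduction competes with a contraction. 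Case (a) closes because the two commuting steps merely push the cut above both rules and converge on the proof in which the cut sits above both; case (b) closes because $\eta$-expanding and then performing the resulting quantum-principal (or axiom) step reproduces the effect of the direct axiom reduction.

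The hard part will be cases (c) and (d), the quantum critical pairs, where convergence is not purely combinatorial: the two reduction orders must yield proofs carrying the \emph{same} unitary label, and the labels are built by different operations, namely composition $\unone\cdot\untwo$ in the quantum-principal rule and tensor $\unone\tens\untwo$ in the contraction. Closing (c) requires associativity of $\tens$ (so that $(\unone\tens\untwo)\tens W=\unone\tens(\untwo\tens W)$ under the canonical identification on $\CC^{2^{k+n+p}}$), and closing (d)---after aligning the mismatched modality arities exactly by the quantum-contraction/expansion argument used in the previous proposition---requires the interchange (functoriality) law $(\unone\tens\untwo)\cdot(\unone'\tens\untwo')=(\unone\cdot\unone')\tens(\untwo\cdot\untwo')$ of the Kronecker product. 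I expect this coherence between $\cdot$ and $\tens$, together with the arity bookkeeping in the mismatched quantum cases, to be the main obstacle. Once it is in place, Newman's Lemma turns local confluence and strong normalization into confluence, and with the previous proposition every proof has a unique cut-free normal form.
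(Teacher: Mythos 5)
Your strong-normalization half is correct, and it is in fact more explicit than what the paper offers: the paper merely asserts that one can attribute a weight to each rule instance so that the total weight strictly decreases, which taken literally cannot handle the commuting reductions (they leave the multiset of rule instances unchanged, so any per-rule weight summed over the proof is invariant); your lexicographic measure $(D,E,T,S)$ supplies the needed position-dependence, and each of the six reduction rules is readily checked to decrease it. Note also a structural difference: the paper does not go through Newman's lemma at all; it claims that $\becomes$ is \emph{strongly} confluent by direct case analysis, which yields confluence with no appeal to termination.

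The genuine gap is in your case (a), which you dismiss in one line. The claim that the two commuting steps ``converge on the proof in which the cut sits above both rules'' is false. Take a cut whose left premise ends with a $\parr$ introducing a formula $B\parr C$ distinct from the cut formula, and whose right premise ends with a $\parr$ introducing $D\parr E$, also distinct from the cut formula (both commutations are legal: the paper's own proof that normal proofs are cut-free applies commuting steps to either premise of a cut, reading the cut symmetrically). Commuting on the right and then on the left yields a proof whose last rule introduces $D\parr E$, with the rule introducing $B\parr C$ above it and the cut above both; commuting in the other order yields the proof in which those two $\parr$ rules occur in the opposite order. These are distinct proof trees, and they can never be joined: inspecting the left-hand sides of the reduction rules, every redex has a cut, an axiom, or a quantum rule as its bottom rule, so a $\parr$ standing at the root of a proof (and not under any cut) is preserved by every reduction; hence all reducts of the first proof end with the rule introducing $D\parr E$, while all reducts of the second end with the rule introducing $B\parr C$. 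Local confluence therefore fails for $\becomes$ as literally defined --- this is the classical rule-permutation problem of sequent-calculus cut elimination --- and Newman's lemma cannot be invoked. (The same configuration is equally fatal to the paper's one-line claim of strong confluence; repairing either argument requires identifying proofs up to permutation of independent rules, i.e.\ morally moving to proof nets, which the paper defers to future work, or else orienting/restricting the commuting reductions so that only one of the two competing steps is available.) By contrast, the quantum cases (c) and (d) that you flag as the hard part are unproblematic: they close exactly by the associativity and interchange laws $(\unone\tens\untwo)\cdot(\unone'\tens\untwo')=(\unone\cdot\unone')\tens(\untwo\cdot\untwo')$ you cite. The obstacle is not where you expect it to be.
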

\begin{proof}
Actually, $\becomes$ is \emph{strongly} confluent, as can be proved by analyzing the different cases.
The fact $\becomes$ is strongly normalizing can be proved by attributing a weight to any rule and by
showing that the total weight of a proof $\pfone$ (i.e. the sum of the weight of all rule instances
in $\pfone$) strictly decreases along $\becomes$.
\end{proof}

\section{Encoding of Quantum Circuits}\label{sec:QCencoding}
\emph{Quantum circuits} are an efficient model of quantum computation. A quantum circuit is an acyclic network 
of quantum gates connected by wires, where each gate represents an operation on the qubits on which the gate acts. 
In this paper,  we  are interested in {\em unitary quantum circuits}, i.e. circuits in which all the gates 
correspond to unitary operations.  It is a standard result that a general quantum circuit can be simulated 
by a unitary quantum circuit (its {\em unitary purification}) plus some ancillary qubits, to be measured 
or ignored at the end of the computation.

In this section, we give some  intuitions about the quantum modality, by illustrating the fact that 
all unitary quantum circuits are captured by \QMLL\ proofs.
Let us  consider proofs in \QMLL\  which do not make use of multiplicative connectives.
We can see a proof of conclusion $\jd{\bqcnp{\bt{\atomone}}{n},\qcnp{\atomone}{n}}  $  as a circuit on $n$ qubits.
The number of occurrences of modalities in which $\alpha^\bot$ (resp. $\alpha$) is nested in the conclusion, 
corresponds to the number of qubits on which the circuit acts.
More precisely, at depth 1 we have the first qubit, at depth 2 the second, and so on.
We can then act on the qubits from $j+1$ to $j+k$, by applying  
a quantum rule $\qrule{k}$ to $\jd{\bqcnp{\bt{\atomone}}{j},\qcnp{\atomone}{j}}  $.
This is best illustrated by some examples. Let us start with a simple circuit on 3 qubits 
(and no operation on them), and its encoding:

{\footnotesize
 $$
 \begin{array}{ccc}
\begin{minipage}{3cm}
 \Qcircuit @C=1em @R=.7em { 
 & \qw & \qw \\ 
 & \qw & \qw \\ 
  & \qw & \qw } 
\end{minipage} 
&
  \qquad
& 
\begin{minipage}{3cm}
 \infer
   {\vdash{\bqcnp{\bt{\alpha}}{3}, \qcnp{\alpha}{3}}}
   {\vdash {\bt \alpha, \alpha}    &  I\otimes I \otimes I\in\unop{3}}
 \end{minipage}
 \end{array}
 $$
 }

\noindent The application of the  Hadamard gate on the second qubit can be represented as follows:

 {\footnotesize
 $$
\begin{array}{ccc}
\begin{minipage}{3cm}
 \Qcircuit @C=1em @R=.7em { 
 & \qw &  \qw \\
 & \gate{H} &\qw \\ 
 & \qw &  \qw }
 
\end{minipage}
&\qquad &
\begin{minipage}{3cm}
\infer
   {\vdash{\bqcnp{\bt{\alpha}}{3}, \qcnp{\alpha}{3}}}
{   \infer
     {\vdash{\bqcnp{\bt{\alpha}}{2}, \qcnp{\alpha}{2}}}
{\infer  {\vdash{\bqcnp{\bt{\alpha}}{1}, \qcnp{\alpha}{1}}}
  {\vdash {\bt \alpha, \alpha}    &  I\in\unop{1}}
 &  H\in\unop{1} }
 & I\in\unop{1}}
 
\end{minipage}
\end{array}
$$
}
 
\noindent Let us now represent a circuit which  applies Hadamard to the first qubit, and the CNOT gate to the second and third qubits:

 {\footnotesize
   $$
   \begin{array}{ccc}
    \begin{minipage}{3cm}
    \Qcircuit @C=1em @R=.7em { 
    & \gate{H} &\qw \\ 
     & \ctrl{1} & \qw \\ 
     & \targ & \qw }
  \end{minipage}
 &\qquad &
  \begin{minipage}{3cm}
  \infer
      {\vdash{\bqcnp{\bt{\alpha}}{3}, \qcnp{\alpha}{3}}}
    {\infer
      {\vdash{\bqcnp{\bt{\alpha}}{1}, \qcnp{\alpha}{1}}}
      {\vdash {\bt \alpha, \alpha}    &  H \in\unop{1}}  
      & CNOT \in\unop{2}
      }
       \end{minipage}
         \end{array}
    $$
 }

\noindent Applying a gate after the other to the same qubit(s) corresponds to composing the unitary operators,
which is naturally performed by the cut-rule.
As an example, the quantum circuit $\qcone$
graphically represented in Figure \ref{fig:exqc} can be encoded as follows:
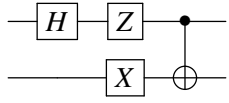
\begin{figure}
$$
\Qcircuit @C=1em @R=.7em { 
& \gate{H} & \gate{Z} & \ctrl{1} & \qw \\ 
& \qw & \gate{X} & \targ & \qw }
$$ 
\caption{An Example Quantum Circuit}\label{fig:exqc}
\end{figure}

{\footnotesize
$$
    \infer
    {\jd{\bqcnp{\bt{\atomone}}{2},\qcnp{\atomone}{2}}}
    {
      \infer
      {\jd{\bqcnp{\bt{\atomone}}{2},\qcnp{\atomone}{2}}}
      {
        \infer[\qrule{1}]
        {\jd{\bqcnp{\bt{\atomone}}{2},\qcnp{\atomone}{2}}}
        {
          \infer[\qrule{1}]
          {\jd{\bqcnp{\bt{\atomone}}{1},\qcnp{\atomone}{1}}}
          {
            \jd{\bt{\atomone},\atomone}
            &
            H\in\unop{1}
          }
          &
          \idmat{}\in\unop{1}
        }
        &
        \infer[\qrule{1}]
        {\jd{\bqcnp{\bt{\atomone}}{2},\qcnp{\atomone}{2}}}
        {
          \infer[\qrule{1}]
          {\jd{\bqcnp{\bt{\atomone}}{1},\qcnp{\atomone}{1}}}
          {
            \jd{\bt{\atomone},\atomone}
            &
            Z\in\unop{1}
          }
          &
          X\in\unop{1}
        }
      }
      &
      \infer[\qrule{2}]
      {\jd{\bqcnp{\bt{\atomone}}{2},\qcnp{\atomone}{2}}}
      {
        \jd{\bt{\atomone},\atomone}
        &
        {CNOT}\in\unop{2}
      }
    }
$$}

\noindent With this, it is easy to see that we can  faithfully capture
any unitary quantum circuit $\qcone$ acting on $m$ qubits by a \QMLL\ proof $\pfone_\qcone$ with
conclusion $\jd{\qqb{\bt{\atomone}}{m}, \qq{\atomone}{m}}$.
Conversely, given a generic \QMLL\ proof, we can retrieve a set of quantum circuits. The next section is devoted
to formalizing and proving this claim. This is done by introducing an abstract machine, which given a 
quantum register and a \QMLL\ derivation, applies to the register the operations coded in the proof.

\section{The Quantum Interaction Abstract Machine}
It is natural to wonder if there is any computational interpretation of the cut elimination procedure we introduced in 
Section~\ref{sect:cutelim}. The classical, multiplicative, portion of \QMLL\ behaves as usual:
$\parr$ and $\tens$ are dual connectives which interact in a purely classical fashion by
annihilating each other. This corresponds to (linear) beta reduction in the lambda calculus.
But what about the new modalities $\qcc$ and $\bqcc$? The reduction rules involving them, namely
quantum principal reduction, quantum $\eta$-expansion and quantum contraction 
correspond to various ways of creating and aggregating
unitary transformations (i.e. quantum gates). However, what we would like to have in order to talk of quantum computation, is the ability to act on a quantum register.

In this section, we describe an interpretation of \QMLL\ proofs in terms of an automata-based
view of the Geometry of Interaction due to Danos and Regnier~\cite{DanosRegnier:PNHil}, called the Quantum Interaction
Abstract Machine (\QIAM\ in the following). This, in particular, will constitute a concrete
computational interpretation of \QMLL\ cut elimination, being a model of it.\\

In order to give the machine, we need to introduce some technical definitions.
A \emph{context} is simply a formula with an ``hole'' $\ec$ in it:
$$
\conone ::=\ec\mmid\conone\parr\typeone\mmid\typeone\parr\conone\mmid\conone\tens\typeone\mmid\typeone\tens\conone\mmid
\qcn{\conone}\mmid\bqcn{\conone}.
$$
The formula obtained by substituting $\typeone$ for $\ec$ in a context $\conone$ is indicated as $\csb{\conone}{\typeone}$.
If $\typeone=\csb{\conone}{\atomone}$ (respectively, if $\typeone=\csb{\conone}{\bt{\atomone}}$), then 
$\conone$ is said to be a \emph{positive} (respectively, a \emph{negative}) \emph{context for} $\typeone$.
If $\conone$ is either positive or negative for $\typeone$, then we simply say that $\conone$ is a \emph{context for}
$\typeone$. To emphasize that a context $\conone$ is positive (negative, respectively) for a formula $\typeone$,
we indicate it with the metavariable  $\pconone$  (respectively, $\nconone$). Given
a context $\conone$, its dual $\bt{\conone}$ can be easily defined, e.g. $\bt{(\typeone\tens\conone)}=\bt{\typeone}\parr\bt{\conone}$.

The {\em nesting depth} of $\conone$, denoted $\depth{\conone}$, is the number of occurrences of modal operators in which
$\ec$ is embedded. A \emph{stack} is an element of $\{\qcc,\bqcc\}^*$, i.e., a finite sequence of elements of $\{\qcc,\bqcc\}$, each
seen as an atomic symbol. 

The {\em quantum interactive abstract machine} $\qiam{\pfone}$ associated to any proof $\pfone$ consists in:
\begin{varenumerate}
\item
  The set of \emph{states} $\sts{\pfone}$, which contains all the
  quadruples in the form $(\typeone,\conone,\stckone,\regone)$,
  where $\typeone$ is an occurrence of a formula in $\pfone$, $\conone$ is 
  a context for $\typeone$, $\stckone$ is a stack, 
  and $\regone$ is a \emph{quantum register} of $\depth{\conone}+\length{\stckone}$ qubits.
\item 
  A \emph{transition relation} $\trs{\pfone}\subseteq\sts{\pfone}\times\sts{\pfone}$.
\end{varenumerate}
The transition relation $\trs{\pfone}\subseteq\sts{\pfone}\times\sts{\pfone}$ is defined
by way of a set of rules which we will introduce shortly. Before doing that, let us remark
that:
\begin{varitemize}
\item
  We can see the  transition rules as providing  instructions for a token to travel around the proof. With the transition 
  $(\typeone,\conone,\stckone,\regone)\trs{\pfone}(\typetwo,\contwo,\stcktwo,\regtwo)$, the token  moves from the (occurrence of)
  formula $\typeone$ to the (occurrence of) formula $\typetwo$. In general, $\typeone$ and $\typetwo$ appear in sequents which 
  are one on top of the other  (i.e., premise and conclusion of the same rule); the only  exception is when $\typeone$ is 
  the principal formula of a cut or an axiom: in such a case $\typetwo$ will be the principal formula $\bt{\typeone}$. 
  In the case when $\typeone$ and $\typetwo$ appear in sequents which are one on top of
  the other,  if $\conone$ is positive (negative, respectively) for $\typeone$, then the token goes \emph{down} 
  (\emph{up}, respectively).  When on axioms and on principal formulas of the cut rule, 
  the token  \emph{inverts its direction}. 
\item 
  The size $\depth{\conone}+\length{\stckone}$ of the quantum register is constant. 
  We operate on the the quantum register $\regone$ only when exiting from a quantum box. At that moment,
  the unitary transformation associated to the box (or its inverse) is applied to $\regone$.
\item
  The r\^ole of the context $\conone$ is similar to the one of the multiplicative stack in ordinary \IAM, while
  the r\^ole of $\stckone$ consists in keeping track of which of the two ports of boxes have been traversed
  to reach the current position. In other words, the length of the stack is exactly the
  ``box-nesting depth'' of $\typeone$ in $\pfone$.
\end{varitemize}
The rules defining $\trs{\pfone}$ are indeed independent on the specific structure of $\pfone$
and, instead, only depend on the six proof rules of \QMLL. They are in Figure \ref{Transitions} and are given 
in an informal but hopefully intuitive way.
\begin{figure}
{\footnotesize
\begin{center}
Rule $\axiom$\\
\begin{tabular}{cc}
\begin{minipage}[!t]{7cm}
\begin{eqnarray*}
(\typeone,\nconone,\stckone,\regone)&\trs{\pfone}&(\bt{\typeone},\bt{\nconone},\stckone,\regone)\\
(\bt{\typeone},\nconone,\stckone,\regone)&\trs{\pfone}&(\typeone,\bt{\nconone},\stckone,\regone)
\end{eqnarray*}
\end{minipage}
&
\begin{minipage}{5cm}
$$
\infer[]
{\jd{\bt{\typeone}, \typeone }}
{}
$$
\end{minipage}
\end{tabular}
\end{center}
\begin{center}
Rule $\cut$\\
\begin{tabular}{cc}
\begin{minipage}{7cm}
\begin{eqnarray*}
(\typeone,\pconone,\stckone,\regone)&\trs{\pfone}&(\bt{\typeone},\bt{\pconone},\stckone,\regone)\\
(\bt{\typeone},\pconone,\stckone,\regone)&\trs{\pfone}&(\typeone,\bt{\pconone},\stckone,\regone)\\
(\ctxone_1,\nconone,\stckone,\regone)&\trs{\pfone}&(\ctxone_2,\nconone,\stckone,\regone)\\
(\ctxtwo_1,\nconone,\stckone,\regone)&\trs{\pfone}&(\ctxtwo_2,\nconone,\stckone,\regone)\\
(\ctxone_2,\pconone,\stckone,\regone)&\trs{\pfone}&(\ctxone_1,\pconone,\stckone,\regone)\\
(\ctxtwo_2,\pconone,\stckone,\regone)&\trs{\pfone}&(\ctxtwo_1,\pconone,\stckone,\regone)
\end{eqnarray*}
\end{minipage}
&
\begin{minipage}{5cm}
$$
\infer[]
{\jd{\ctxone_1,\ctxtwo_1}}
{\jd{\ctxone_2,\typeone} & \jd{\ctxtwo_2,\bt{\typeone}}}
$$
\end{minipage}
\end{tabular}
\end{center}
\begin{center}
Rule $\parr$\\
\begin{tabular}{cc}
\begin{minipage}{7cm}
\begin{eqnarray*}
(\typeone\parr\typetwo,\nconone\parr\typetwo,\stckone,\regone)&\trs{\pfone}&(\typeone,\nconone,\stckone,\regone)\\
(\typeone\parr\typetwo,\typeone\parr\nconone,\stckone,\regone)&\trs{\pfone}&(\typetwo,\nconone,\stckone,\regone)\\
(\typeone,\pconone,\stckone,\regone)&\trs{\pfone}&(\typeone\parr\typetwo,\pconone\parr\typetwo,\stckone,\regone)\\
(\typetwo,\pconone,\stckone,\regone)&\trs{\pfone}&(\typeone\parr\typetwo,\typeone\parr\pconone,\stckone,\regone)\\
(\ctxone_1,\nconone,\stckone,\regone)&\trs{\pfone}&(\ctxone_2,\nconone,\stckone,\regone)\\
(\ctxone_2,\pconone,\stckone,\regone)&\trs{\pfone}&(\ctxone_1,\pconone,\stckone,\regone)
\end{eqnarray*}
\end{minipage}
&
\begin{minipage}{5cm}
$$
  \infer[]
  {\jd{\ctxone_1,\typeone\parr \typetwo}}
  {\jd{\ctxone_2,\typeone,\typetwo }}
$$
\end{minipage}
\end{tabular}
\end{center}
\begin{center}
Rule $\tens$\\
\begin{tabular}{cc}
\begin{minipage}{7cm}
\begin{eqnarray*}
(\typeone\tens\typetwo,\nconone\tens\typetwo,\stckone,\regone)&\trs{\pfone}&(\typeone,\nconone,\stckone,\regone)\\
(\typeone\tens\typetwo,\typeone\tens\nconone,\stckone,\regone)&\trs{\pfone}&(\typetwo,\nconone,\stckone,\regone)\\
(\typeone,\pconone,\stckone,\regone)&\trs{\pfone}&(\typeone\tens\typetwo,\pconone\tens\typetwo,\stckone,\regone)\\
(\typetwo,\pconone,\stckone,\regone)&\trs{\pfone}&(\typeone\tens\typetwo,\typeone\tens\pconone,\stckone,\regone)\\
(\ctxone_1,\nconone,\stckone,\regone)&\trs{\pfone}&(\ctxone_2,\nconone,\stckone,\regone)\\
(\ctxtwo_1,\nconone,\stckone,\regone)&\trs{\pfone}&(\ctxtwo_2,\nconone,\stckone,\regone)\\
(\ctxone_2,\pconone,\stckone,\regone)&\trs{\pfone}&(\ctxone_1,\pconone,\stckone,\regone)\\
(\ctxtwo_2,\pconone,\stckone,\regone)&\trs{\pfone}&(\ctxtwo_1,\pconone,\stckone,\regone)
\end{eqnarray*}
\end{minipage}
&
\begin{minipage}{5cm}
$$
  \infer[]
  {\jd{\ctxone_1,\ctxtwo_1,\typeone\tens\typetwo}}
  {\jd{\ctxone_2,\typeone} & \jd{\ctxtwo_2,\typetwo}}
$$
\end{minipage}
\end{tabular}
\end{center}
\begin{center}
Rule $\qrule{n}$\\
\begin{tabular}{cc}
\begin{minipage}{9cm}
\begin{eqnarray*}
(\qqb{\typeone}{n},\qqb{\nconone}{n},\stckone,\regone)&\trs{\pfone}&(\typeone,\nconone,\stckone\cdot\bqcc^n,\regone)\\
(\qq{\typetwo}{n},\qq{\nconone}{n},\stckone,\regone)&\trs{\pfone}&(\typeone,\nconone,\stckone\cdot\qcc^n,\regone)\\
(\typeone,\pconone,\stckone\cdot\bqcc^n,\regone)&\trs{\pfone}&(\qqb{\typeone}{n},\qqb{\pconone}{n},\stckone,\regone)\\
(\typeone,\pconone,\stckone\cdot\qcc^n,\regone)&\trs{\pfone}&(\qqb{\typeone}{n},\qqb{\pconone}{n},\stckone,
(\idmat{\depth{P}} \otimes\unone^* \otimes \idmat{\length{\stckone}} )
(\regone))\\
(\typetwo,\pconone,\stckone\cdot\qcc^n,\regone)&\trs{\pfone}&(\qq{\typetwo}{n},\qq{\pconone}{n},\stckone,\regone)\\
(\typetwo,\pconone,\stckone\cdot\bqcc^n,\regone)&\trs{\pfone}&(\qq{\typetwo}{n},\qq{\pconone}{n},\stckone,
(\idmat{\depth{P}} \otimes\unone \otimes \idmat{\length{\stckone}}  )(\regone))\\
\end{eqnarray*}
\end{minipage}
&
\begin{minipage}{5cm}
$$
  \infer[\qrule{n}]
  {\jd{\qqb{\typeone}{n},\qq{\typetwo}{n}}}
  {\jd{\typeone,\typetwo} & \unone\in\unop{n}}
$$
\end{minipage}
\end{tabular}
\end{center}}
\caption{Defining Rules for $\trs{\pfone}$}\label{Transitions}
\end{figure}
As it can be easily seen, the relation $\trs{\pfone}$ is a partial function: for every
state $\stone$ there is \emph{at most} one state $\sttwo$ such that $\stone\trs{\pfone}\sttwo$.
Moreover, it is an injection: no two states $\stone,\sttwo$ can lead
to the same $\stthree$ via $\trs{\pfone}$. 

Now, let us turn our attention to the way the quantum register $\regone$ is manipulated during
computation. As previously observed, the only way to alter the value of the quantum register consists in exiting from
a box. Moreover, the way any state $(\typeone,\nconone,\stckone,\regone)$ evolves does \emph{not}
depend on $\regone$: the operations applied to the underlying quantum register along
$\trs{\pfone}$ only depend on the first three components of the state. 
The current value of the quantum register has no effect on the value of the first three components 
after any transition. This is captured by the following:
\begin{lemma}[Uniformity]\label{lemma:cc}
For every proof $\pfone$ and for every $\typeone,\conone,\stckone$, there are
$\typetwo,\contwo,\stcktwo$ and a unitary operator
$\unone$ on $\CC^{2^{\depth{\conone}+\length{\stcktwo}}}$ such
that for every $\regone$, if
$(\typeone,\conone,\stckone,\regone)\trs{\pfone}(\typethree,\conthree,\stckthree,\regtwo)$
then $\typethree=\typetwo$, $\conthree=\contwo$, $\stckthree=\stcktwo$ and $\regtwo=\unone(\regone)$.
\end{lemma}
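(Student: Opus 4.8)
The plan is to prove the statement by direct inspection of the six families of transition rules in Figure~\ref{Transitions}, exploiting the fact that each rule is already displayed in the desired form. Fix $\pfone$, an occurrence $\typeone$, a context $\conone$ and a stack $\stckone$. Since $\trs{\pfone}$ is a partial function, at most one rule can fire from $(\typeone,\conone,\stckone,\regone)$, and \emph{which} one fires is determined by the syntactic shape of $\typeone,\conone,\stckone$ together with the rule of $\pfone$ immediately above or below $\typeone$; crucially, this does not depend on $\regone$. Reading off the right-hand side of that rule therefore yields $\typetwo,\contwo,\stcktwo$ as explicit functions of $(\typeone,\conone,\stckone)$, with $\regone$ nowhere occurring among the first three target components. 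This already establishes $\typethree=\typetwo$, $\conthree=\contwo$ and $\stckthree=\stcktwo$ independently of $\regone$.

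It then remains to extract the operator $\unone$ and to check that it is unitary and independent of $\regone$. First I would treat the rules for $\axiom$, $\cut$, $\parr$ and $\tens$, together with transitions $1$, $2$, $3$ and $5$ of $\qrule{n}$: in all of these the fourth component is copied verbatim, so one takes $\unone=\idmat{\depth{\conone}+\length{\stckone}}$, the identity on the register space, which is unitary. The only register-altering cases are transitions $4$ and $6$ of $\qrule{n}$, where $\regone$ is premultiplied by $\idmat{\depth{P}}\otimes\unone^*\otimes\idmat{\length{\stckone}}$ and $\idmat{\depth{P}}\otimes\unone\otimes\idmat{\length{\stckone}}$ respectively. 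Each such operator is unitary, being a tensor product of unitaries ($\idmat{k}$ is the identity, $\unone$ is the unitary labelling the box, and $\unone^*$ is its inverse). Moreover its sole proof-dependent ingredient, the box unitary $\unone$, is a fixed label of the $\qrule{n}$-instance to which $\typeone$ belongs: it is data of $\pfone$, not of $\regone$. Hence the operator depends only on $(\typeone,\conone,\stckone)$, as required.

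The one genuine point to verify is dimensional consistency: I must confirm that $\idmat{\depth{P}}\otimes\unone^{*}\otimes\idmat{\length{\stckone}}$ really is an endomorphism of $\CC^{2^{\depth{\conone}+\length{\stckone}}}$, the space in which $\regone$ lives. In transition $4$ the source context is $\pconone$ and the source stack is $\stckone\cdot\qcc^n$, so the register carries $\depth{\pconone}+\length{\stckone}+n$ qubits, which factors exactly as $\CC^{2^{\depth{P}}}\otimes\CC^{2^{n}}\otimes\CC^{2^{\length{\stckone}}}$, matching the three tensor factors of the operator (and symmetrically for transition $6$). This is precisely the size-constancy observation recorded before the statement, namely that $\depth{\conone}+\length{\stckone}=\depth{\contwo}+\length{\stcktwo}$ along any transition, and it guarantees that $\unone$ lands in the correct space. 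Beyond this bookkeeping there is no real obstacle; the only conceptually subtle step is to recognise that the apparent dependence of transitions $4$ and $6$ on the local unitary is a dependence on $\pfone$ rather than on $\regone$, so that uniformity in $\regone$ is preserved.
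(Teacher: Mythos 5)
Your proposal is correct and matches the paper's own justification: the paper states Lemma~\ref{lemma:cc} without a formal proof, relying exactly on the observation (made in the text just before the statement) that the rules of Figure~\ref{Transitions} determine the first three components of the target state from the first three components of the source, and that the register is either copied or acted upon by the box unitary (or its adjoint) tensored with identities. Your case-by-case inspection, including the dimensional bookkeeping $\depth{\pconone}+n+\length{\stckone}$ for transitions $4$ and $6$ of $\qrule{n}$, is precisely the verification the paper treats as immediate, so there is nothing to add.
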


But what are the reasons why $\trs{\pfone}$
can be partial? Clearly, it is not defined on any state $\stone=(\typeone,\pconone,\stckone,\regone)$
where $\typeone$ occurs in the conclusion of $\pfone$: $\pconone$ tells us that the next
state should be ``below $\stone$'', but there's nothing below the conclusion of $\pfone$.
For the same reasons, no state are mapped to a quadruple in the form
$(\typeone,\nconone,\stckone,\regone)$. This motivates the following definition:
the set $\ists{\pfone}$ of \emph{initial states} for a proof $\pfone$
consists of the states in $\sts{\pfone}$ in the form $(\typeone,\nconone,\emstck,\regone)$, where $\typeone$ is the
conclusion of $\pfone$. Analogously, \emph{final states} are
those in the form $(\typeone,\pconone,\emstck,\regone)$ and are the elements of $\fsts{\pfone}$.
The \emph{semantics} of $\pfone$ is the partial function 
$$
\sem{\pfone}:\ists{\pfone}\rightharpoonup\fsts{\pfone}
$$ 
defined by stipulating that $\sem{\pfone}(\stone)=\sttwo$ iff $\stone\trs{\pfone}^*\sttwo$.
One can prove that if we start from an initial state, we are guaranteed to reach a final state:
\begin{proposition}\label{prop:total}
For every $\pfone$, $\sem{\pfone}$ is total.
\end{proposition}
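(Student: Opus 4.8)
The plan is to prove that every computation started in an initial state terminates and can only halt in a final state. I would rely on three facts: that $\trs{\pfone}$ is a partial injection (already noted), the Uniformity Lemma \ref{lemma:cc}, and a well-bracketing invariant on the stack. First I would use Lemma \ref{lemma:cc} to forget the quantum register: since the first three components of a state evolve independently of $\regone$, it is enough to prove termination for the induced transition $f$ on triples $(\typeone,\conone,\stckone)$, and then lift the path back, reconstructing the register as the image of the initial one under the product of the unitaries met along the way. The transition $f$ is again a partial injection, as this descends from the corresponding property of $\trs{\pfone}$ together with register-independence.

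Second, I would observe that the triples reachable from a fixed initial state form a \emph{finite} set: the formula occurrences of $\pfone$ are finitely many, the contexts for a given occurrence are finitely many, and stack length is bounded, because $\depth{\conone}+\length{\stckone}$ is invariant along the computation and therefore equals its initial value $\depth{\nconone}$, forcing $\length{\stckone}\leq\depth{\nconone}$. Thus $f$ is a partial injection on a finite set. Termination is then purely combinatorial: along the orbit $t_0,t_1=f(t_0),t_2=f(t_1),\ldots$ of the initial triple, if $f$ were always defined the orbit would be infinite and, by finiteness, would satisfy $t_i=t_j$ for some $i<j$; injectivity of $f$ would then push the coincidence backwards until $t_0=t_{j-i}=f(t_{j-i-1})$, giving $t_0$ a predecessor. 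But an initial state has none, since every transition that yields a negative-context (``upward'') state lands on a \emph{premise} occurrence of a rule, never on a conclusion of $\pfone$. Hence the orbit is finite and ends at a triple $t_k$ on which $f$ is undefined.

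The remaining and central step is to show that such a stuck triple is necessarily \emph{final}. A quick inspection of the six rule groups shows that $f$ fails to be defined only (a) on a positive-context triple whose formula is a conclusion of $\pfone$, or (b) when the token is about to exit a box of arity $m$ but the top of the stack is not a homogeneous block $\qcc^m$ or $\bqcc^m$. To exclude (b) and sharpen (a) I would prove, by induction along the computation, that the stack always records the boxes currently entered in a well-bracketed fashion: read from the bottom it is $x_1^{n_1}\cdots x_\ell^{n_\ell}$ with $x_i\in\{\qcc,\bqcc\}$ and $n_1,\ldots,n_\ell$ the arities of the boxes enclosing the current occurrence, from outermost to innermost. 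This is preserved by every transition, entering a box pushing a homogeneous block $x^n$ and exiting popping one such block; in particular $\length{\stckone}$ equals the box-nesting depth of $\typeone$, as already remarked. Consequently the block on top when leaving the innermost box is homogeneous of the right length, so (b) cannot occur, and a conclusion of $\pfone$ has box-nesting depth $0$, so in case (a) the stack is $\emstck$ and the triple has exactly the final shape $(\typeone,\pconone,\emstck)$.

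Combining the three steps, the orbit issued from any initial state reaches a final triple; lifting it through Lemma \ref{lemma:cc} produces an actual final state reached from the original initial state, which is precisely totality of $\sem{\pfone}$. I expect the verification of the stack invariant to be the main obstacle, and within it the analysis of the quantum rule $\qrule{n}$, in particular the cases in which the token crosses a box from one port to the other (where the unitary $\unone$ or its adjoint is applied); the finiteness and termination arguments, by contrast, are routine given the facts already established in the text.
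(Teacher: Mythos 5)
Your proof is correct and takes essentially the same route as the paper's: your well-bracketing invariant is precisely the paper's notion of a \emph{legal stack}, your ``stuck triples are final'' step is the paper's claim that legal-stack states are deadlock-free, and your termination argument (finiteness of register-stripped states plus injectivity of $\trs{\pfone}$ on the first three components, with the cycle pushed back to an initial state that has no predecessor) is exactly the paper's, only with the combinatorial details spelled out where the paper leaves them implicit.
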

\begin{proof}
A state is said to be have a \emph{legal stack} if its third component is coherent with
the box-depth of its first component in the proof $\pfone$.
On the one hand, any state $\stone$ reachable from an initial state has the property of
having a legal stack, as can be easily proved by induction on the length of any chain of 
transitions leading any initial state to $\stone$. On the other hand, any state with
a legal stack is \emph{deadlock-free}, i.e. it is either final or such that $\stone\trs{\pfone}\sttwo$ for some $\sttwo$.
This means that starting from any initial state we can either reach a final state or
go on forever. But the latter is not possible, since $\trs{\pfone}$ is injective even
when restricted to the first three components of states, and moreover the set of states having
a legal stack (again, if we discard the quantum register) is finite.
\end{proof}
\begin{proposition}
If $\pfone\becomes\pftwo$, then $\sem{\pfone}=\sem{\pftwo}$.
\end{proposition}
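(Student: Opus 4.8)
The plan is to reduce the statement to a purely local property of each reduction rule and then verify that property, keeping the combinatorial ``routing'' of the token separate from its algebraic action on the register. First I would observe that it suffices to treat a single step $\pfone\becomes\pftwo$, and that since $\becomes$ is the contextual closure of the six reduction rules, the redex sits inside $\pfone$ as a subproof $\pfthree$ with some conclusion sequent $\jd{\ctxone}$, with $\pftwo$ obtained by replacing $\pfthree$ by its contractum $\pfthree'$. Because the transitions depend only on the six proof rules and never on the global shape of the proof, every maximal excursion of the token inside $\pfthree$ enters through one conclusion of $\pfthree$ going up and leaves through a (possibly different) conclusion going down; hence $\pfthree$ induces a partial function on the states located at its conclusions, and the whole execution of $\pfone$ is the interleaving of this boundary function with the transitions of the surrounding context (this is well defined since $\trs{\pfone}$ is an injective partial function and, by Proposition~\ref{prop:total}, $\sem{\pfone}$ is total). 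Consequently, if $\pfthree$ and $\pfthree'$ induce the \emph{same} boundary function, then $\sem{\pfone}=\sem{\pftwo}$. By iterating Lemma~\ref{lemma:cc} this boundary function factors as a partial function on the routing components $(\typeone,\conone,\stckone)$ together with, for each input, a unitary acting on $\regone$; so I only need to check, for each reduction rule, that (i) the routing is preserved and (ii) the induced unitaries coincide.

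Next I would dispatch the structural and multiplicative cases: axiom reduction, multiplicative principal reduction, and the three commuting reductions. For these the argument is the one used for the classical multiplicative \IAM. An axiom behaves as an identity wire; a $\tens$/$\parr$ pair cut against each other splits one traversal into the traversals of the residual cuts on $\typeone$ and $\typetwo$; a commuting conversion merely postpones a cut. In every case the token enters and leaves the redex at the same boundary formula with the same context $\conone$ and stack $\stckone$, so (i) holds, and since the transitions for $\axiom$, $\cut$, $\parr$ and $\tens$ never touch $\regone$, the induced unitary is the identity on both sides, so (ii) holds as well.

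The genuinely new work is the three quantum rules. For \textbf{quantum $\eta$-expansion} the right-hand box carries $\idmat{n}$, which acts as the identity on exit, while the stack symbols $\bqcc^n$ and $\qcc^n$ are pushed and popped entirely within the redex; hence both the routing and the (identity) unitary agree with those of the bare modal axiom on the left. For \textbf{quantum contraction} the token on the left traverses the inner $\qrule{k}$-box and then the outer $\qrule{n}$-box, applying $\unone$ and $\untwo$ on adjacent tensor factors, whereas on the right it traverses a single $\qrule{k+n}$-box applying $\unone\tens\untwo$; one checks that the positions prescribed by $\depth{\pconone}$ and $\length{\stckone}$ line up, so that the two register transformations are literally equal, using $(\idmat{a}\otimes\unone\otimes\idmat{b})(\idmat{a}\otimes\untwo\otimes\idmat{b'})=\idmat{a}\otimes(\unone\tens\untwo)\otimes\idmat{b'}$ with the appropriate indices. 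For \textbf{quantum principal reduction} the left traverses the two boxes that are cut together, applying $\untwo$ (or $\untwo^*$) and then $\unone$ (or $\unone^*$) according to which port is crossed, while the right traverses the single box labelled $\unone\cdot\untwo$; here the order of composition and the placement of the stars must match $\unone\cdot\untwo$ and $(\unone\cdot\untwo)^*=\untwo^*\unone^*$.

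I expect the quantum principal reduction, together with quantum contraction, to be the main obstacle. The difficulty is not the functoriality of composition and tensor for unitaries, which is immediate, but the bookkeeping that must precede it: I must verify that the tensor-factor placements $\idmat{\depth{\pconone}}\otimes(\cdot)\otimes\idmat{\length{\stckone}}$ dictated by the surrounding context and stack coincide on the two sides of each reduction, even though the reduction alters the box-nesting depths and the nesting depths of the contexts involved. Reconciling these register layouts — showing that the $\depth{\cdot}$ and $\length{\cdot}$ parameters transform in lockstep so that the unitaries act on exactly the same qubits — is where the care is needed; once the layouts are matched, equality of the register transformations, and hence of the induced boundary unitaries, follows at once.
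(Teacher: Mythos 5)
Your proposal is correct and follows essentially the same route as the paper: the paper's own proof is a one-sentence assertion that each cut-elimination step may alter the underlying \QIAM\ but only in ways that cannot be observed from the environment (i.e., by querying from initial states), and your argument --- localizing to the redex, factoring its boundary behaviour into routing plus unitaries via the Uniformity Lemma, and checking the six reduction rules case by case --- is precisely the formalization of that claim. The bookkeeping you flag as delicate (composition order and adjoint placement in quantum principal reduction, tensor-factor alignment via $\depth{\cdot}$ and $\length{\cdot}$ in quantum contraction) does work out exactly as you anticipate.
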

\begin{proof}
It is an easy task to prove that each cut-elimination step
can possibly alter the underlying \QIAM, but in a way which cannot
be observed from the environment, i.e., by querying the machine
from an initial state.
\end{proof}
Lemma~\ref{lemma:cc} justifies the following definition: given a proof $\pfone$ with conclusion $\typeone$ and a negative context $\nconone$ for 
its $\typeone$, \emph{the semantics} of $\pfone$ relative to $\nconone$ is the
function
$$
\semp{\pfone}{\nconone}:\CC^{2^{\depth{\nconone}}}\longrightarrow\CC^{2^{\depth{\nconone}}}
$$
defined by stipulating that $\semp{\pfone}{\nconone}(\regone)=\regtwo$ iff 
$(\typeone,\nconone,\emstck,\regone)\trs{\pfone}^*(\typeone,\pconone,\emstck,\regtwo)$.
Noticeably:
\begin{theorem}\label{thm:soundness}
For every $\pfone$ and for every negative context $\nconone$ for the conclusion of $\pfone$,
$\semp{\pfone}{\nconone}$ is unitary. Moreover, a quantum circuit computing it can
be effectively extracted from $\nconone$.
\end{theorem}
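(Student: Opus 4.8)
The plan is to read $\semp{\pfone}{\nconone}$ off directly from a single run of $\qiam{\pfone}$, using the machine facts already in hand. By Proposition~\ref{prop:total} the partial function $\sem{\pfone}$ is total, so from the initial state $\stone_0=(\typeone,\nconone,\emstck,\regone)$ there is a finite chain $\stone_0\trs{\pfone}\stone_1\trs{\pfone}\cdots\trs{\pfone}\stone_p$ ending in a final state $(\typeone,\pconone,\emstck,\regtwo)$. First I would check that domain and codomain agree: as already remarked, the quantity $\depth{\conone}+\length{\stckone}$ is invariant along $\trs{\pfone}$, and since both $\stone_0$ and $\stone_p$ carry the empty stack $\emstck$, this forces $\depth{\pconone}=\depth{\nconone}$, so that $\semp{\pfone}{\nconone}$ is indeed an endofunction of $\CC^{2^{\depth{\nconone}}}$.

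For unitarity I would iterate the Uniformity Lemma (Lemma~\ref{lemma:cc}) along this chain, by induction on its length. Each step $\stone_i\trs{\pfone}\stone_{i+1}$ acts on the register by a fixed operator $\wtone_i$ that does not depend on the register: inspecting the rules of Figure~\ref{Transitions}, $\wtone_i$ is either the identity (for all axiom, cut and multiplicative steps, and for the box steps that leave the register untouched) or an operator $\idmat{a}\tens\unone^{\pm}\tens\idmat{b}$ with $\unone\in\unop{n}$ the unitary labelling the traversed quantum box, hence unitary in every case. Since by uniformity the chain itself, together with its per-step operators, is the same for every $\regone$, the map $\semp{\pfone}{\nconone}$ coincides with the single composite $\wtone_{p-1}\cdots\wtone_0$, a finite product of unitaries, and is therefore unitary. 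This settles the first claim.

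For the second claim I would extract the circuit from the very same chain. Retaining only the box-exiting steps, each one contributes a gate $\idmat{a_i}\tens\unone_i^{\pm}\tens\idmat{b_i}$ which, by the shape of the rule $\qrule{n}$, acts on the contiguous block of wires $a_i+1,\dots,a_i+n_i$ singled out by the nesting depth of the context at that step, with $a_i+n_i+b_i=\depth{\nconone}$ throughout. Listing these gates $G_0,\dots,G_{q-1}$ in the order in which the token crosses them yields a quantum circuit on $\depth{\nconone}$ wires whose associated operator is exactly $\wtone_{p-1}\cdots\wtone_0=\semp{\pfone}{\nconone}$. This procedure is effective: $\trs{\pfone}$ is a computable partial function, injective on the set of legal-stack states, which is finite once the register is discarded; hence the chain contains no repetitions and can be generated mechanically from $\pfone$ and $\nconone$ by simulating the token and dropping the register component (legitimate again by Lemma~\ref{lemma:cc}), recording the block and label of each box-exit as we go.

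I expect the main obstacle to be the index bookkeeping of the second claim rather than any conceptual difficulty: one must verify that the block $a_i+1,\dots,a_i+n_i$ dictated by the context depth in $\qrule{n}$ is tracked coherently along the whole run, so that the individual gates genuinely assemble into one circuit on a fixed set of $\depth{\nconone}$ wires, and that the right-to-left order of operator composition is matched to the left-to-right order of the circuit. Everything else reduces to the already-established facts that the run is register-independent (Lemma~\ref{lemma:cc}), terminating and loop-free (Proposition~\ref{prop:total}), and assembled from per-step unitaries.
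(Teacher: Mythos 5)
Your proposal is correct and follows essentially the same route as the paper, whose entire proof is a one-line appeal to Lemma~\ref{lemma:cc}, Proposition~\ref{prop:total}, and the effectiveness of $\qiam{\pfone}$ --- precisely the three ingredients you iterate and unpack (register-independence of the run, per-step unitarity of the box-exit operators, and mechanical simulation with the register discarded). Your version simply makes explicit the composition of per-step unitaries and the gate-ordering bookkeeping that the paper leaves implicit.
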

\begin{proof}
An easy corollary of Lemma~\ref{lemma:cc}, Proposition~\ref{prop:total} and the fact
$\qiam{\pfone}$ is and effective and executable description of $\pfone$.
\end{proof}
\section{Conclusions}
Theorem~\ref{thm:soundness} establishes a sort of  \emph{soundness} result: any \QMLL\ proof
can be interpreted as a set of independent  unitary quantum circuits by way of a concrete GoI interpretation.
We already know (Section \ref{sec:QCencoding}) that 
any unitary quantum circuit $\qcone$ acting on $m$ qubits is captured by a \QMLL\ proof $\pfone_\qcone$ with
conclusion $\jd{\qqb{\bt{\atomone}}{m},\qq{\atomone}{m}}$. Hence,  \QMLL\ is somehow a \emph{complete} system for 
unitary quantum circuits.
 Some observations  are now in order:
\begin{varitemize}
\item
  The encoding is correct: the (unique!) circuit obtained
  from $\sem{\pfone_\qcone}$ by way of Theorem~\ref{thm:soundness} is $\qcone$.
\item
  The unitary operators in $\pfone_\qcone$ are exactly the quantum gates
  in the circuit $\qcone$. If we apply cut-elimination to $\pfone_\qcone$, however, some of those
  unitary operators are composed or tensorized. From a purely syntactical point of view this can be seen
  as a way to alter the quantum circuit underlying a proof, preserving equivalence.
\end{varitemize}
We also observe that the encoding does not make use of the multiplicative connectives at all. So, in a sense the
modal fragment of \QMLL\ is itself complete for quantum circuits. A further clarification
of the r\^ole of multiplicatives in \QMLL\ is a fascinating 
subject which we leave for future work.

\bibliographystyle{eptcs}
\bibliography{main}

\end{document}